\documentclass[conference]{IEEEtran}     

\IEEEoverridecommandlockouts                              

\usepackage{graphics} 
\usepackage{epsfig} 
\usepackage{times} 
\usepackage{amsmath,bbm,amsfonts,amssymb,amsthm}  
\usepackage{url}
\usepackage{dsfont}
\newtheorem{theorem}{Theorem}

\usepackage{cite,comment}

\usepackage{color}
\usepackage{algorithm}

{ 

}
\usepackage{multirow}

\newcommand{\bi}{\begin{itemize}}
\newcommand{\ei}{\end{itemize}}
\newcommand{\bd}{\begin{displaymath}}
\newcommand{\ed}{\end{displaymath}}
\newcommand{\be}{\begin{eqnarray*}}
\newcommand{\ee}{\end{eqnarray*}}

\newcommand{\Pfdroop}{\text{$P$-\,$\omega$ }}
\newcommand{\QVdroop}{\text{$Q$-$V$ }}

\usepackage[normalem]{ulem}

\title{
Coordinated Frequency Regulation in Grid-Forming Storage Network via Safety-Consensus }
\author{ Ramij Raja Hossain, Kaustav	Chatterjee, Sai Pushpak	Nandanoori, Soumya	Kundu, \\ Laurentiu Marinovici, Karan Kalsi, and Diane Baldwin
\thanks{The authors are with Pacific Northwest National Laboratory (PNNL), Richland, 99354, WA, USA. This research was supported by the Embedded Storage Project funded by the DOE Office of Electricity (OE). PNNL is a multi-program national laboratory operated for the U.S. Department of Energy (DOE) by Battelle Memorial Institute under Contract No. DE-AC05-76RL01830.}}

\begin{document}
\maketitle
\begin{abstract}
Inverter-based storages are poised to play a prominent role in future power grids with massive renewable generation. Grid-forming inverters (GFMs) are emerging as a dominant technology with synchronous generators (SG)-like characteristics through primary control loops.
Advanced secondary control schemes, e.g., consensus algorithms, allow GFM-interfaced storage units to participate in frequency regulations and restore nominal frequency following grid disturbances. However, it is imperative to ensure transient frequency excursions do not violate critical safety limits while the grid transitions from pre- to post-disturbance operating point. This paper presents a hierarchical safety-enforced consensus method --- combining a device-layer (decentralized) transient safety filter with a secondary-layer (distributed) consensus coordination --- to achieve three distinct objectives: limiting transient frequency excursions to safe limits, minimizing frequency deviations from nominal, and ensuring coordinated power sharing among GFM-storage units. The proposed hierarchical (two-layered) safety-consensus technique is illustrated using a GFM-interfaced storage network on an IEEE 68-bus system under multiple grid transient scenarios. 

\end{abstract}



\section{Introduction}
To meet the pressing challenge of the low carbon future, power utilities around the world have started integrating large-scale renewable generation. However, the associated volatility and intermittency of renewable power resources necessitate the inclusion of energy storage units as a key grid infrastructure \cite{o2019use,twitchell2022enabling}. This has led to the emerging concept of an \textit{embedded storage network} with grid-connected storage units --- strategically placed at the transmission-distribution intersections --- providing core grid support, serving as energy buffers, and enhancing system flexibility and resilience under varying operating conditions \cite{o2019use,kerby2024impacts,chatterjee2024grid}. In order to achieve the full potential of such network-embedded bi-directional storage units in providing essential grid support under disturbances, it is imperative to design advanced control strategies for system-wide coordination of the storage units.

Recent advances in inverter technologies have led to the development of grid-forming (GFM) technology --- which acts as a controlled voltage source, participates in frequency/voltage regulations, and mimics certain characteristics of synchronous generators (SGs) --- thereby emerging as a widely sought-after solution for grid-connecting storage and renewable energy sources \cite{lasseter2019grid}.
%
%
%
%
%
%
%
%
  %
%
The control layers of GFMs are usually characterized by hierarchical structure \cite{bidram2012hierarchical,9290331}, with primary controls at the individual device level and secondary controls at the system level.
Local feedback-based $P$-$f$ and $Q$-$V$ droop laws constitute a popular primary control mechanism in GFM, providing frequency/voltage regulation and power-sharing at a faster time-scale with grid dynamics \cite{de2007voltage,du_wei_mod}. 
Secondary controls, on the other hand, are needed to eliminate the steady-state deviations of frequency/voltage from its nominal value and usually operate at a relatively slower time-scale \cite{bidram2012hierarchical,porco}. Of particular interest to this work are the distributed (especially, consensus-based) approaches to secondary control, \cite{porco,mohiuddin2020unified,singhal2022consensus,9732645}, which are attractive because of their improved scalability, and resilience against single-point attack/failure. The work in \cite{porco}, for example, combines decentralized proportional droop control and integral control with distributed averaging in secondary voltage and frequency control of islanded microgrid, while \cite{mohiuddin2020unified,singhal2022consensus,9732645} apply consensus control to coordinate GFMs and GFLs in microgrids.

It is imperative that critical safety limits (such as frequency limits as per IEEE Standards \cite{ieee1547}) are not violated during grid operations, at any time. The state-of-the-art primary and secondary control strategies of GFMs, however, do not explicitly account for safety guarantees in their design. Therefore, as we illustrate later in this paper, these existing controls cannot prevent unsafe transient excursions as the grid transitions to a post-disturbance operating point. The idea of safety-enforced control design in power grid, as explored recently in \cite{kundu2019distributed,chen2019compositional,kundu2020transient,bouvier2022distributed,sun2023learning,slac3r}, deals with steering the system transient trajectories to stay within (or, converge to) pre-specified constraints or bounds. In these works, barrier functions-based methods are used to certify the forward invariance of a pre-specified safe operating region, thereby guaranteeing safe transient excursions. However, the safety control methods in  \cite{kundu2019distributed,chen2019compositional,kundu2020transient,bouvier2022distributed,sun2023learning,slac3r} have been proposed as single-layer, local (decentralized or distributed), solutions and often rely on computationally expensive optimizations which are not amenable to real-time implementations.


In this work, we address this gap in the existing state-of-the-art control mechanisms by proposing a safety-enforced, hierarchical, distributed control strategy for GFM-storage network. Specifically, a local measurement-driven, minimally invasive, control solution --- henceforth referred to as `safety-consensus' --- is proposed to augment the local safety controls with the existing consensus-based secondary-layer. The proposed `safety-consensus' method achieve three distinct objectives: (a) safe transient frequency evolution, (b) minimizing frequency deviation, and (c) coordinated power sharing. In this paper, we consider a network of GFM-interfaced storage units, equipped with droop-controls at the primary-layer and a consensus-control at the secondary-layer. We implement and validate the proposed   `safety-consensus' control solution with the GFM-storage network on a modified IEEE 68-bus system for optimal, coordinated, and safe system operation under multiple grid transient scenarios. 

\section{System Modeling}
We consider a positive sequence model of a transmission network with $N$ buses, $m$ SGs, and $n$ GFM-storage units. The active and reactive power injections at bus $i$ are:
\begin{subequations}\label{pfeqn}
\begin{align}
    & P_i \!=\!V_i\sum_{k=1}^{N}V_k\Big[G_{ik}\cos{(\theta_i-\theta_k)} \!+\! B_{ik}\sin{(\theta_i-\theta_k)}\Big],\\
   & Q_i \!=\!V_i\sum_{k=1}^{N}V_k\Big[G_{ik}\sin{(\theta_i-\theta_k)} \!- \!B_{ik}\cos{(\theta_i-\theta_k)}\Big]
\end{align}
\end{subequations}
where $G_{ik}$ and $B_{ik}$ are, respectively, the transfer conductance and susceptance of the line connecting the buses $i$ and $k$.
The SGs are modeled with the generator-governor system \cite{pai}:
\begin{subequations}
\label{eq:sg}
\begin{align}
\text{(SG-$i$)}\qquad &\dot{\theta}_i = \omega_i - \omega_0,\\
&\dot{\omega}_i = \frac{1}{M_i}\Big[D_i(\omega_0-\omega_i) + P_{mi} - P_{i}\Big],\\
&\dot{P}_{mi} = -\frac{1}{T_{ch}}\Big[P_{mi} + \frac{\omega - \omega_0}{R_{gov}}\Big].
\end{align} 
\end{subequations}
\noindent where $\theta_i, \omega_i$, $\omega_0$, $M_i$, $D_i$, $P_{mi}$, and $P_i$ are generator angle, frequency (rad/s), nominal frequency (rad/s), inertia, damping coefficients, mechanical power input, and power injection, respectively. 
$T_{ch}$ and $R_{gov}$ are, respectively, the time-constants and droop of the speed governor. 
Following the concept of the embedded storage network \cite{kerby2024impacts,chatterjee2024grid}, GFM-storage units are located close to each of the $n$ load buses. GFMs adopt the WECC-approved REGFM\_A1 model \cite{du2023model} with \Pfdroop and \QVdroop droop, described as follows \cite{kwon2023risk,kwon2024coherency}:
\begin{subequations}
\label{eq:gfm}
\begin{align}
\!\!\text{(GFM-$i$)}\quad &\dot{\theta}_i = \omega_i - \omega_0,\\
&\dot{\omega}_i = \frac{1}{\tau_i}\big[\omega_0 - \omega_i +m_{p_i} (P^{{set}}_i - P_i)\big],\label{omega_eq}\\
&\dot{V}^e_i \!= \!\frac{1}{\tau_i}\big[V_i^{set} \!-\! V_i \!-\! {V}^e_i \!+\! m_{q_i} (Q^{{set}}_i \!-\! Q_{i})\big]
,\\ &\dot{E}_i = k^{pv}_i \dot{V}^e_i + k^{iv}_i V^e_i\,,\\
& S_i\leq \sqrt{P_i^2+Q_i^2}
\end{align} 
\end{subequations}
where, $\theta_i$ and $\omega_i$ are, respectively, the voltage angle and frequency of the internal bus, $V_i$, and $E_i$ are respectively the voltage magnitudes of the external, and the internal buses. $m_{p_i}, m_{q_i}$ are the \Pfdroop and \QVdroop droop gains, respectively. $\tau_i$ is the time constants of the low-pass measurement filter. $k^{pv}_j$ and $k^{iv}_j$ are, respectively, the proportional and integral gains in the \QVdroop droop control. $V_i^e$ represents voltage error. $P_i^{set}$, $Q_i^{set}$ and $V_i^{set}$ are the active power, reactive power, and voltage set-points. The injected active and reactive power ($P_i,Q_i$) from the GFM-storage satisfy the capacity constraint, where $S_i$ is the apparent power rating (capacity) of the GFM-storage.
We consider the storage units to remain dormant during equilibrium conditions and come into action after system disturbances. Therefore, in pre-disturbance steady-state condition, for $i$th storage device: $P_i^{set} = 0$, and $Q_i^{set} = 0$. 

\section{Proposed method: Safety-Consensus Control}

\begin{figure*}[t]
    \centering
    \includegraphics[width = 0.80\linewidth]{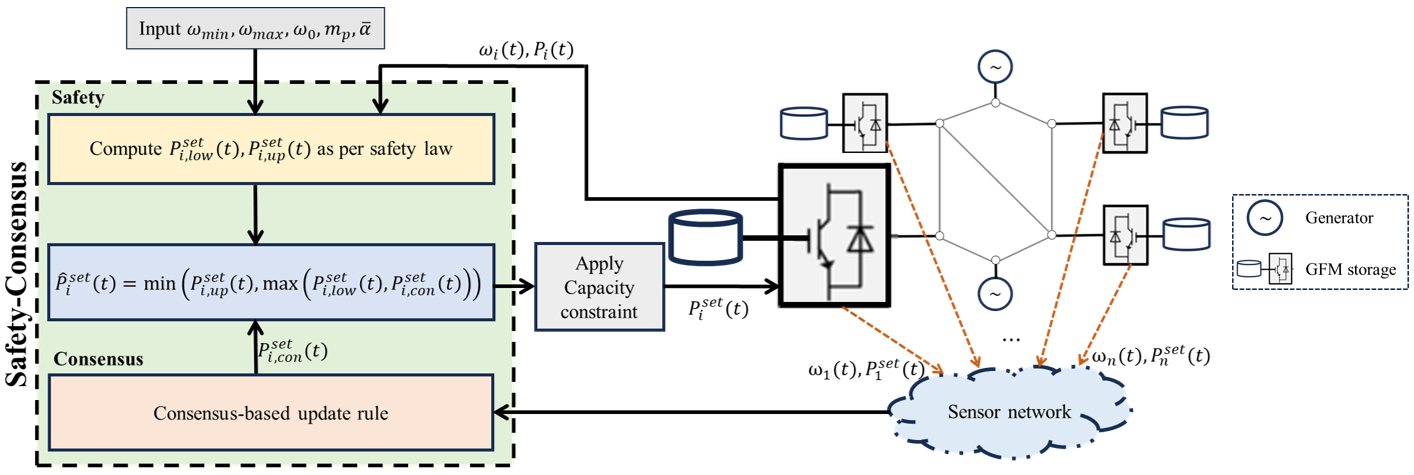} 
    \caption{Proposed \text{`safety-consensus'} method augments a local, fast timescale, safety-control with a distributed, slower timescale, consensus-control. }
    \label{fig:framework}
    \vspace{-8pt}
\end{figure*}

\subsection{Conventional Hierarchical Control of GFMs}
GFM-interfaced storage units are equipped with two essential control layers: (a) a primary control layer and (b) a secondary control layer, to achieve satisfactory steady-state and transient system operation under varying conditions. 
For the GFM model (\ref{eq:gfm}) used in this work, the primary control of GFM-interfaced storage units comes from the droop functions and depends on the droop coefficients $m_p$ and $m_q$. This primary control layer operates in faster time-scale (millisecond (ms) timescales) and helps to participate in frequency/voltage regulations by dispatching active/reactive power proportional to droop (in p.u.) times frequency/voltage deviations. Here, it is important to note that set-point values $P_i^{set}$, $Q_i^{set}$ remain unaffected in exercising primary control, resulting in post-disturbance steady-state frequency/voltage deviations. 

The secondary control layer, on the other hand, operates on a slower timescale (in the order of a few seconds) and updates the set-points $P_i^{set}$, $Q_i^{set}$ to correct the steady-state deviations of system variables. In this paper, we focus on GFM-interfaced storage units participating in frequency regulation (while a similar approach could be extended for voltage regulation, as well). As per the consensus-based algorithm in \cite{singhal2022consensus}, the update rule for $P_i^{set}$ can be written as follows:
\begin{align}
        \!P_{i,+}^{set}  &= P_{i}^{set}\!- \zeta_1 \!({\omega_{i}} \!-\!\omega_0) \!- \!\zeta_2\! \sum_{j \in \mathcal{N}_i} (m_{p_i} P^{set}_{i} \!\!-\! m_{p_j} P^{set}_{j}) \label{consPset}
\end{align}
where the subscript `$+$' is used to denote the update to the set-point $P_i^{set}$ based on local measurements ($\omega_i$) and the set-points ($P_j^{set}$) at the neighboring GFMs, denoted by the set $\mathcal{N}_i$\,.
The gains $\zeta_1\!>\!0\,,\, \zeta_2\!>\!0$ are tuned to regulate the learning rate of the gradient descent step and the speed of the frequency regulation.
The consensus-based secondary control updates the active power set-points on a slow timescale, at regular intervals $\Delta t$ (using a zero-order hold), as follows:
\begin{align}\label{eq:consensus}
    &P^{{set}}_{i,con}(t) = \begin{cases}
        P^{{set}}_{i,+} \;\mbox{as per \eqref{consPset} } & \forall t =k\,\Delta t\,,\,\text{$k$: integers} \\
        P^{{set}}_{i}(t-\Delta t) & \; \mbox{otherwise}
     \end{cases}
\end{align}

The control set-point adjustment, given in (\ref{consPset}), captures the frequency restoration to the nominal value and coordinated active power sharing criteria.  However, neither the droop-based primary nor the consensus-based secondary control guarantees a safe transient evolution of frequency within any pre-specified safety boundary $[\omega_{min},\omega_{max}]$ around the nominal frequency $\omega_0$\,. Additionally, the timescale difference between primary and secondary controls aggravates this issue in case of intermittent disturbances. We propose to solve these issues using a `safety-consensus' method, 
as described below. 

\subsection{Local Safety-Enforcing Control}
The primary objective of the safety-enforcing layer, as shown in the authors' recent work \cite{slac3r}, is to modify $P^{set}_i$  for each GFM-$i$, such that frequency satisfies safety limits $\omega_{min} \!\leq\! \omega_i\!\leq\! \omega_{max}$ throughout the pre- to post-disturbance transient evolution. The local, minimally invasive, safety control is achieved by defining barrier functions \cite{bouvier2022distributed,slac3r}:
\begin{align}\label{eq:def_h}
    h_{min}(\omega) := \omega -\omega_{min}\,,\,\text{ and }\,h_{max}(\omega) := \omega_{max}-\omega\,. 
\end{align}
The barrier-based safety control principle works as follows. The function $h_{min}$ (equivalently, $h_{max}$) takes negative values when the lower (equivalently, the upper) frequency limit is violated during transients. Therefore, whenever such \textit{unsafe} frequency excursions occur, the goal of the barrier-based safety-control is to ensure that the time-derivative of the corresponding barrier function ($h_{min}$ or $h_{max}$), along the system trajectories, is positive, i.e., $\dot{h}_{min}(\omega)\!\geq\! 0 $ or $\dot{h}_{max}(\omega) \!\geq\! 0 $\,.
%
%
To achieve faster and finite time steering of $\omega$ in the safe region, we enforce stricter conditions: 
\begin{align}\label{eq:h_dot}
    \!\!\!\dot{h}_{min}(\omega) &\!\geq\! -\alpha \left({h}_{min}(\omega)\right)^p,\,\,\dot{h}_{max}(\omega) \!\geq\! -\alpha \left({h}_{max}(\omega)\right)^p
\end{align} 
where $\alpha>0$ is a control gain and $p$ is an odd integer. Note that both $h_{min}$ and $\left(h_{min}\right)^p$ hold the the same signs, and so do $h_{max}$ and $\left(h_{max}\right)^p$. 
%
Using the definitions \eqref{eq:def_h} and the GFM dynamics \eqref{eq:gfm}, the time-derivative conditions in \eqref{eq:h_dot} yield the following safety-enforcing conditions on the set-points:
%
\begin{subequations}\label{des_con}
\begin{align}
  \omega_0 - \omega_i +m_{p_i}(P^{{set}}_i - P_i) \geq -\tau{\alpha} (\omega - \omega_{min})^p \\
  \omega_0 - \omega_i +m_{p_i}(P^{{set}}_i - P_i) \leq -\tau{\alpha} (\omega - \omega_{max})^p
\end{align}
\end{subequations}
Setting $\bar{\alpha} \!:=\! {\tau\alpha}/{m_p}$, we obtain the safety-control policy:
\begin{subequations}\label{up_low}
\begin{align}
    \!\!\!\!\textit{(safety) }\,\,P^{{set}}_{{i,low}}&\leq P^{set}_i\leq P^{{set}}_{{i,up}} \\
    \text{where, }P^{{set}}_{{i,low}} &:= P_i \!+\! \frac{1}{m_p} \!\left(\omega_i\!-\!\omega_0\right)\!\! -\bar{\alpha} (\omega_i\!-\!\omega_{{min}})^p \label{eq:P_set_low_eq}
    \\ 
    P^{{set}}_{{i,up}}  &:= P_i \!+\! \frac{1}{m_p} \!\left(\omega_i\!-\!\omega_0\right)\!\! -\bar{\alpha} (\omega_i\!-\!\omega_{{max}})^p \label{eq:P_set_up_eq}
\end{align}
\end{subequations}

\subsection{Proposed Enhancement: Safety-Consensus}

In our proposed hierarchical `safety-consensus' method, depicted in Fig.~\ref{fig:framework}, we augment the local, faster timescale, safety control law in \eqref{up_low}, with the distributed, slower timescale, consensus control law in \eqref{eq:consensus} to achieve the advantages of both methods, i.e., achieve three distinct objectives: (a) safe transient frequency evolution, (b) minimizing frequency deviation, and (c) coordinated power sharing. To do so, we integrate the obtained safety criteria on $P^{set}_{i,low}$ and $P^{set}_{i,up}$ with the consensus-based update $P^{{set}}_{i,con}(t)$\,, as follows:
\begin{align}\label{eq:P_set_saf}
        & \widehat{P}^{{set}}_{{i}}(t) = \min(P^{{set}}_{{i,up}}(t), \max(P^{{set}}_{{i,low}}(t), P^{set}_{i,con}(t)))    
\end{align}
Next, we summarize the main result and design guidelines concerning the safety-consensus control. Assume symmetric safety limits: $\omega_{min}\!:=\!\omega_0\!-\!\Delta\omega,\,\omega_{max}\!:=\!\omega_0\!+\!\Delta\omega$\,, for $\Delta \omega\!>\!0$\,. 
\begin{theorem}[Safety-Consensus]
    Consider sufficiently small droop gains $m_{p_i}\!<\!\Delta\omega/S_i$, and bounded consensus control set-points $\left|P_{i,con}^{set}\right|\!<\!S_i\!-\!\delta_i/m_{p_i}$, where
    \begin{align*}
        \delta_i:=\left(\frac{2\Delta\omega}{\alpha m_{p_i}}\right)^{1/p}.
    \end{align*}
    Then, for all bounded network disturbance $\left|P_i\right|\!\leq\!\Delta P$, for some $\Delta P\!\in\!(0,\,\Delta\omega/m_{p_i}\!-\!S_i)$, the proposed safety-consensus control satisfies the following properties: 
    \begin{enumerate}
        \item \textit{(minimally invasive):} whenever inside the safe region, i.e., $\!\left|\omega\!-\!\omega_0\right|\!\!\leq\!\Delta\omega$, the safety-consensus retains the consensus set-points, i.e., $\widehat{P}^{{set}}_{{i}}(t)=P^{set}_{i,con}(t)$\,.
        
    \item \textit{(safety-enforcing):} whenever the safety is violated by some margin $\delta_i$, i.e., $\left|\omega\!-\!\omega_0\right|\!>\!\Delta\omega+\delta_i$, safety-consensus overrides the consensus set-points, i.e.,
    \begin{align*}
        \omega(t)\!>\!\omega_0\!+\!\Delta\omega\implies \widehat{P}^{{set}}_{{i}}(t)\!=\!P^{{set}}_{{i,up}}(t)\implies \dot{\omega}\!<\!0\\
        \omega(t)\!<\!\omega_0\!-\!\Delta\omega\implies \widehat{P}^{{set}}_{{i}}(t)\!=\!P^{{set}}_{{i,low}}(t)\implies \dot{\omega}\!>\!0
    \end{align*}
    ensuring finite-time convergence to $\delta_i$-margin of safety.
    \end{enumerate}
\end{theorem}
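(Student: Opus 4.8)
The plan is to recast the clipping in \eqref{eq:P_set_saf} as a pair of scalar comparisons and then verify each property by deciding which of the three arguments of $\min(\cdot,\max(\cdot,\cdot))$ is selected. Since $\widehat{P}^{set}_i=P^{set}_{i,con}$ exactly when $P^{set}_{i,low}\le P^{set}_{i,con}\le P^{set}_{i,up}$, while $\widehat{P}^{set}_i=P^{set}_{i,up}$ (resp. $P^{set}_{i,low}$) exactly when $P^{set}_{i,con}\ge P^{set}_{i,up}$ (resp. $P^{set}_{i,con}\le P^{set}_{i,low}$), the whole theorem reduces to bounding the barrier-shifted set-points $P^{set}_{i,low},P^{set}_{i,up}$ of \eqref{up_low} against the admissible consensus range $\big(-(S_i-\delta_i/m_{p_i}),\,S_i-\delta_i/m_{p_i}\big)$. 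A first routine step is to record the ordering $P^{set}_{i,low}<P^{set}_{i,up}$ for every $\omega_i$, which holds because $x\mapsto x^p$ is strictly increasing for odd $p$ and $\omega_i-\omega_{min}>\omega_i-\omega_{max}$; this is what lets the clip collapse to exactly one branch.

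The analytically cleanest ingredient is a closed-loop identity obtained by substituting the saturating set-points into \eqref{omega_eq}. Using $m_{p_i}\bar\alpha=\tau_i\alpha$, the droop term cancels the $(\omega_0-\omega_i)$ term and one gets $\dot\omega_i=-\alpha(\omega_i-\omega_{max})^p$ whenever $\widehat{P}^{set}_i=P^{set}_{i,up}$, and $\dot\omega_i=-\alpha(\omega_i-\omega_{min})^p$ whenever $\widehat{P}^{set}_i=P^{set}_{i,low}$. For odd $p$ these identities immediately give the sign claims of the safety-enforcing part, $\omega_i>\omega_{max}\Rightarrow\dot\omega_i<0$ and $\omega_i<\omega_{min}\Rightarrow\dot\omega_i>0$, independently of any parameter bounds. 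The finite-time statement then follows from the scalar comparison: on the override region $\omega_i-\omega_{max}>\delta_i$, writing $e:=\omega_i-\omega_{max}$ gives $\dot e=-\alpha e^p\le-\alpha\delta_i^p<0$, so $e$ reaches the margin $\delta_i$ in time at most $\big(e(0)-\delta_i\big)/\big(\alpha\delta_i^p\big)<\infty$ (the velocity is bounded away from zero on $[\delta_i,e(0)]$), with the lower limit handled symmetrically.

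The core of the argument is therefore to show that the design conditions force the clip onto the intended branch. For the minimally invasive property I would bound $P^{set}_{i,low}$ from above and $P^{set}_{i,up}$ from below over the safe band $|\omega_i-\omega_0|\le\Delta\omega$: inside the band the barrier term is nonnegative in $P^{set}_{i,low}$ and nonpositive in $P^{set}_{i,up}$, so together with the disturbance bound $\Delta P<\Delta\omega/m_{p_i}-S_i$ one obtains $P^{set}_{i,low}\le-(S_i-\delta_i/m_{p_i})$ and $P^{set}_{i,up}\ge S_i-\delta_i/m_{p_i}$, which sandwich the admissible consensus interval and leave $P^{set}_{i,con}$ untouched. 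For the safety-enforcing property I would instead exploit the overshoot: when $\omega_i-\omega_0>\Delta\omega+\delta_i$ the nonlinear term $-\bar\alpha(\omega_i-\omega_{max})^p$ dominates in magnitude the linear droop contribution $\tfrac{1}{m_{p_i}}(\omega_i-\omega_0)$ and the disturbance, and the calibration $\delta_i^p=2\Delta\omega/(\alpha m_{p_i})$ is chosen precisely so that $P^{set}_{i,up}$ is driven below $-(S_i-\delta_i/m_{p_i})<P^{set}_{i,con}$, activating the upper clip (symmetrically, the lower clip for $\omega_i<\omega_0-\Delta\omega-\delta_i$).

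The step I expect to be the main obstacle is exactly this branch-selection verification, because $P^{set}_{i,low}$ and $P^{set}_{i,up}$ are \emph{not} monotone in $\omega_i$: the linear droop term and the $p$-th power barrier term compete, so each set-point function has an interior extremum (located where $\bar\alpha\,p\,(\omega_i-\omega_{max})^{p-1}=1/m_{p_i}$) whose position must be compared with $\Delta\omega$ and with $\delta_i$ to ensure the required inequalities hold uniformly over the relevant frequency range, rather than only at the band edge. I would resolve this by a short extremum analysis showing that, under the small-droop condition $m_{p_i}<\Delta\omega/S_i$, the worst case of $P^{set}_{i,low}$ over the safe band and the critical crossing at the $\delta_i$-margin both respect the sandwiching inequalities; this is where the specific form of $\delta_i$ and the small-droop assumption are indispensable, and it also certifies that the override stays active throughout the descent, closing the finite-time argument.
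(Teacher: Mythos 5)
Your proposal follows essentially the same route as the paper's own (very terse) proof sketch: part~1 reduces to showing $P^{set}_{i,con}\in\big(P^{set}_{i,low},P^{set}_{i,up}\big)$ over the safe band, and part~2 to showing the consensus set-point falls outside that interval beyond the $\delta_i$-margin, followed by combining \eqref{eq:h_dot} with \eqref{eq:gfm} to get the sign of $\dot{\omega}$. You actually go further than the paper by writing out the closed-loop identity $\dot{\omega}_i=-\alpha(\omega_i-\omega_{max})^p$ on the override branch, the explicit finite-time comparison bound, and the interior-extremum issue for $P^{set}_{i,low}$ and $P^{set}_{i,up}$ over the safe band --- a genuine subtlety that the paper's sketch does not address.
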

\begin{proof}
    (Sketch) The first part of the claim follows by showing that $P_{i,con}^{set}\!\in\!(P_{i,low}^{set},P_{i,up}^{set})$ for every $\left|P_i\right|\!\leq\!\Delta P$\,, $\left|\omega\!-\!\omega_0\right|\!\leq\!\Delta\omega$\,, and $\left|P_{i,con}^{set}\right|\!\!<\!S_i\!-\!\delta_i/m_{p_i}$. For the second part, we need to show that $P_{i,con}^{set}\!\notin\!(P_{i,low}^{set},P_{i,up}^{set})$ whenever $\left|\omega\!-\!\omega_0\right|\!>\!\Delta\omega\!+\!\delta_i$\,, followed by combining \eqref{eq:h_dot} with \eqref{eq:gfm}.
\end{proof}
Finally, the capacity limit of the storage is enforced as:
\begin{align}
    \!\!P^{{set}}_{{i}}(t) \!&=\! \min\!\left(P^{{set}}_{{i,max}}(t), \max\!\left(-P^{{set}}_{{i,max}}(t), \widehat{P}^{{set}}_{i}(t)\right)\right)    \label{eq:P_set_fin}\\
    \text{with }\,&P^{set}_{i,max} := \sqrt{S_i^2 - Q_i^2}\,.\notag
\end{align}

\section{Simulation Results}
\begin{figure*}[t]
    \centering
    \includegraphics[width = 0.75\linewidth]{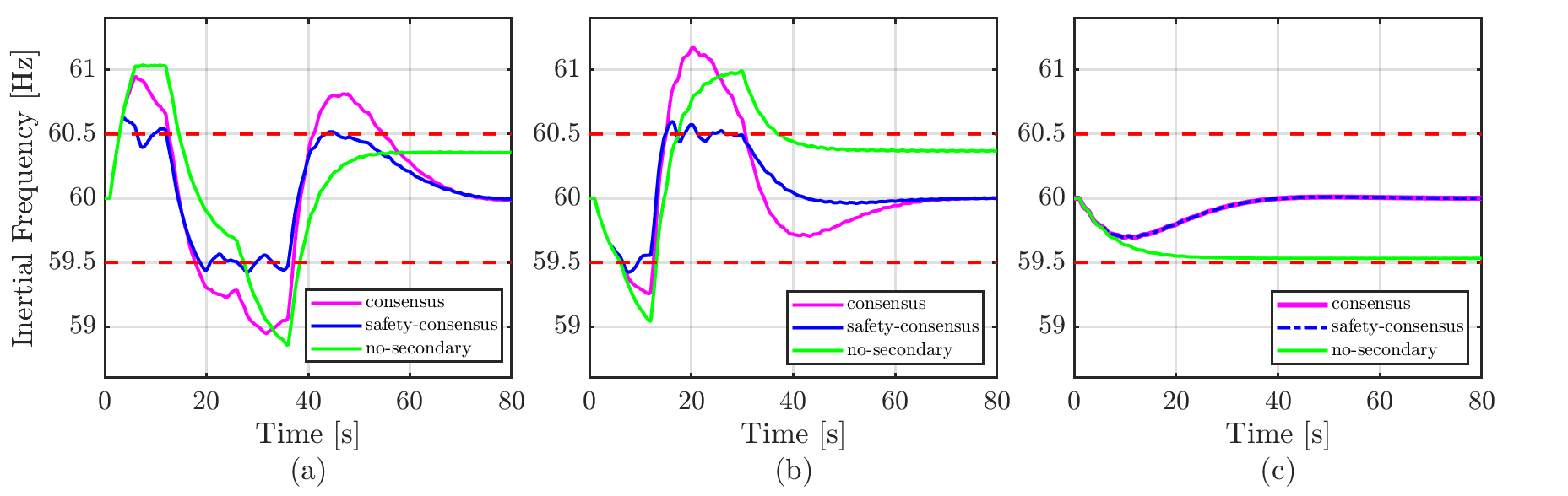} 
    \caption{{Grid inertial frequency response for three control methods under (a) Scenario 1, (b) Scenario 2, and (c) Scenario 3.} }
    \label{fig:comb_sc_freq}
    \vspace{-10pt}
\end{figure*}
We consider an embedded GFM-storage network on a modified IEEE 68-bus test system, with $m=16$ SGs, and $n=35$ GFM-storage units located at the load buses. Each storage capacity is considered to be uniform, with their cumulative capacity set at 10\% of the total system load \cite{chatterjee2024grid}. SGs and GFMs are modeled as per \eqref{eq:sg} and \eqref{eq:gfm}, respectively. 
Generator parameters are taken from the standard data set. For inverter parameters, the normalized value of the droop coefficients $m_i^p$ and $m_i^q$ are selected as 0.05 p.u./p.u. \cite{du_wei_mod}, while the filter time constant $\tau$ is set to $0.01s$. The secondary control layers parameters are as follows: $\zeta_1 = 2 $, $\zeta_2= 0.05$, $\bar{\alpha}=5\times10^6$, and  $p=3$. 
The GFM-storage units are expected to provide support during contingencies, and assumed to have no steady-state power injections. 
The lower and upper-frequency limits are defined as 59.5 and 60.5 Hz, respectively, while the nominal frequency value is 60 Hz. 
%
%
%
Three different controls are implemented for comparison: (a) `no-secondary' (only droop-based primary layer), (b) `consensus' (droop-based primary with consensus-based secondary), and (c) `safety-consensus' (proposed enhancement). The operation timescale of different control layers is furnished below:
\begin{table}[h!]
\centering
\begin{tabular}{c||cc}
\hline
\textbf{`no-secondary'}           & \multicolumn{1}{c||}{\textbf{`consensus'}} & \textbf{`safety-consensus'}         \\ \hline
\multirow{2}{*}{0.05 sec} & \multicolumn{1}{c||}{\multirow{2}{*}{4 sec}} & safety: 0.05 sec \\ \cline{3-3} 
                         & \multicolumn{1}{c||}{}                   & consensus: 4 sec \\ \hline
\end{tabular}
\end{table}
To illustrate its benefits, 
the performance of the proposed `safety-consensus' method is tested for two different large contingency scenarios (Scenarios 1, 2), each comprising multiple events. Moreover, a third, less severe, scenario (Scenario 3) is used to demonstrate the minimally-invasive property of the safety-consensus method. 

\textbf{Scenario 1.} At $t=1\,s$, load at bus 47 is decreased by 10.8\% (of total load). Next, the generator at bus 67 is tripped $t=6\,s$. During the ongoing system recovery, load at bus 18 and bus 4 are increased, respectively, by 13.5\% (of total load) at $t=12\,s$ and by 10.8\% (of total load) at $t=26\,s$. Finally, the system experiences a load decrease at bus 15 load by 15.2\% (of total load) at $t=36\,s$. 
Here, we computed inertial frequency (center of inertia (CoI) frequency) to describe the system frequency dynamics. \underline{\textsc{No-Secondary:}} According to Fig.~\ref{fig:comb_sc_freq}(a) (`no-secondary'), the load decrease at $1\,s$ caused the frequency to go above the safe operating limits of $60.5$ Hz, then the generator trip and two load increases caused the frequency to go below the lower limit of $59.5$ Hz. The final load change (a decrease) at $36\,s$ brings the frequency back to the safe operating region $[59.5, 60.5]$ Hz but with a steady-state deviation from the desired 60 Hz value. 
\underline{\textsc{Consensus:}} With the `consensus' method, the secondary control actions are effective in correcting the steady state frequency deviation with respect to the nominal frequency of 60 Hz; however, like the `no-secondary' case, the `consensus' method also fails to constrain the frequency value within the safe-zone of $[59.5, 60.5]$ Hz during the disturbance events. Particularly, after the load decreases at $1\,s$, the frequency transients fail to stay in the safe zone. 
Additionally, after the generator trip and two consecutive load increase frequency profile experiences a reverse excursion below 59.5 Hz with the consensus method and stays outside the safe operating zone for almost 20\,s. \underline{\textsc{Safety-Consensus:}} To this end, the proposed `safety-consensus' method successfully constrained the frequency within the safe operating zone throughout the contingency events. Particularly around $3\,s$ when the frequency crosses the upper bound $60.5$ Hz of the safe operating zone, safety control becomes active and helps the frequency to return to the safe zone. Likewise, at around $18\,s$\,, `safety-consensus' method limits the downward movement of frequency at $59.5$ Hz, and helps to maintain the lower safety limits throughout the contingency period till $35\,s$. Next, following the load decrease at $36\,s$, the proposed `safety-consensus' method successfully mitigates the upper frequency limit violations.
Finally, the in-built consensus-based component of the `safety-consensus' method effectively corrects the frequency deviation from 60 Hz after the transient phase passes. The set-point and active power injection profile are shown in Fig.~\ref{fig:set_point_1}. A heatmap capturing maximum frequency deviation from nominal frequency $\Delta f = |f-f_0|$ is plotted for each bus of the power network (see, Fig.~\ref{fig:freq_map_1}) for all 3 cases. Fig.~\ref{fig:freq_map_1} clearly shows that for `no-secondary' and `consensus', the frequency violation throughout the network is maximum, while the same is minimized for `safety-consensus'. 

\begin{figure}[t]
    \centering 
    \includegraphics[width = 1.00\linewidth]{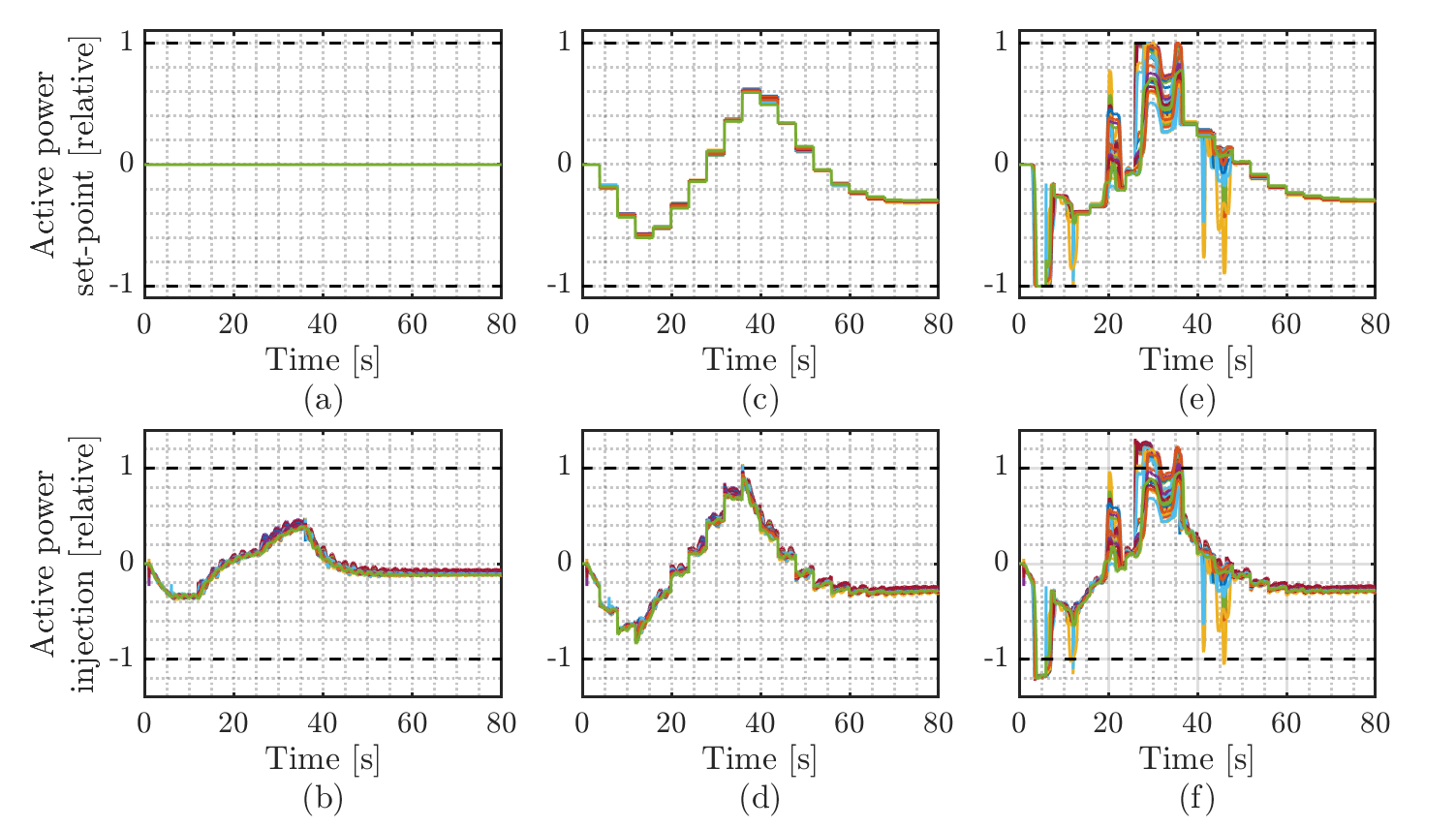} 
    \caption{Scenario 1: `no-secondary'--- active power (a) set-points, (b) injection; `consensus'--- active power (c) set-points, (d) injection; `safety-consensus'--- active power (e) set-points, (f) injection.} 
    \label{fig:set_point_1}
    \vspace{-10pt}
\end{figure}
\begin{figure}[t]
    \centering 
    \includegraphics[width = 1.00\linewidth]{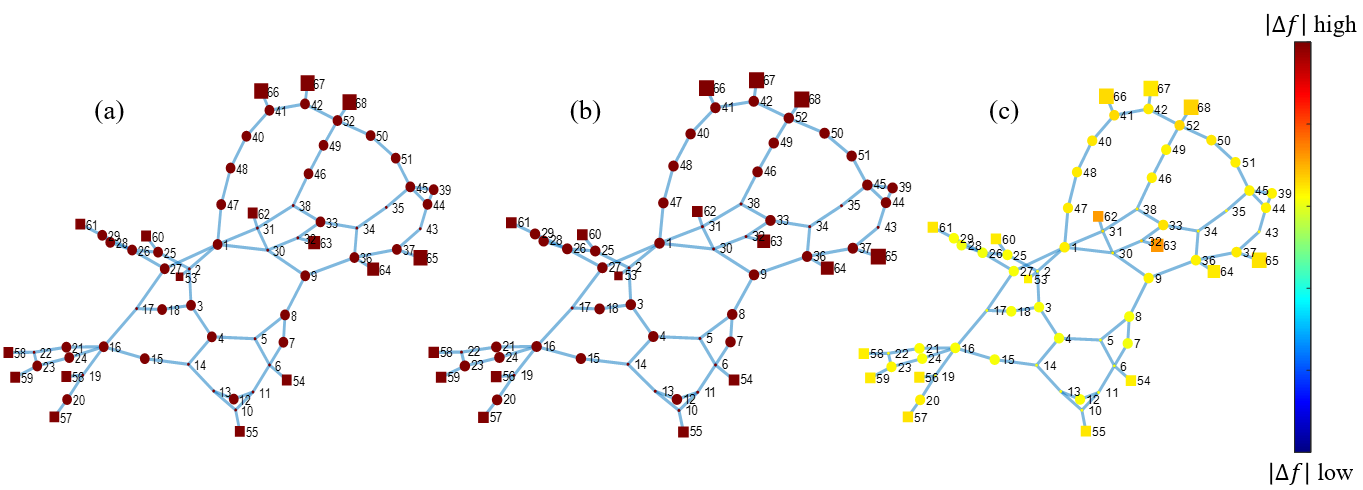} 
    \caption{Maximum Frequency deviation heatmap  for Scenario 1: (a) no-secondary, (b) consensus, (c) safety-consensus.}
    \label{fig:freq_map_1}
    \vspace{-10pt}
\end{figure}

\textbf{Scenario 2.} At $t=1\,s$, the generator at bus 67 is tripped, followed by the tripping of a second generator at bus 57 at $t=6\,s$. Next, during the ongoing system recovery, the load at bus 3 was reduced by 13\% (of total load) at $t=12\,s$, followed by a load increase at bus 44 by 9\% (of total load) at $t=30\,s$. Again, with `no-secondary' control, violations of both safety limits and steady-state deviation of frequency are observed in Fig.~\ref{fig:comb_sc_freq}(b). While with the `consensus' method, steady-state deviations are mitigated, safety limit violations still exist. On the contrary, our proposed `safety-consensus' method is successful in maintaining frequency within the safe operating zone of $[59.5, 60.5]$ Hz, and maintain \textit{zero $(0)$} steady-state frequency deviation, as shown in Fig.~\ref{fig:comb_sc_freq}(b). 

\textbf{Scenario 3.} To show that the `safety-consensus' method is minimally invasive, we tested a single disturbance scenario (\textit{Scenario 3}) with a load change of 3.8\% at bus 16. This disturbance does not cause any safety limit violation. Fig.~\ref{fig:comb_sc_freq}(c) clearly shows that the `safety-consensus' and `consensus' methods produce identical outcomes.

\section{Conclusion}
This paper considers the safety aspects in the control designs of GFM-interfaced storage units. The retirement of conventional generation necessitates the inclusion of renewable generations supported by two-way energy storage units for resilient operation. We integrated the safety enforcing control layer in the consensus-based implementation of designing secondary control of GFMs. A theoretical analysis is included to provide design guidelines for the proposed safety-consensus control, achieving the desired control objectives of minimally invasive, safe frequency regulation. Finally, we validate the proposed measurement-driven method with numerical experiments for the IEEE 68-bus test system.

\bibliographystyle{IEEEtran}
\bibliography{Ref_control_paper}

\end{document}